\newtheorem{theorem}{Theorem}
\newtheorem*{theorem*}{Theorem}
\newtheorem{lemma}{Lemma}
\newtheorem*{corollary*}{Corollary}
\newcommand{\id}{{\rm{id}}} 
\newcommand{\cB}{{\mathcal{B}}}
\newcommand{\cH}{\mathcal{H}}
\newcommand{\cL}{\mathcal{L}}
\newcommand{\cT}{\mathcal{T}}
\newcommand{\1}{\mathbbm{1}}
\def\>{{\rangle}}
\def\<{{\langle}}
\newcommand{\be}{\begin{equation}}
\newcommand{\ee}{\end{equation}}
\newcommand{\bea}{\begin{eqnarray}}
\newcommand{\eea}{\end{eqnarray}}
\newcommand{\tr}[1]{\mathrm{tr}\left[#1\right]} 
\newcommand{\norm}[1]{\left\lVert #1 \right\rVert}
\begin{document}

\title[Quantum Zeno effect generalized]{Quantum Zeno effect generalized}

\author[M\"obus]{Tim M\"obus$^{1,2}$}
\author[Wolf]{Michael M. Wolf$^{1,3}$}
\email{m.wolf@tum.de}
\address{$^1$ Department of Mathematics, Technical University of Munich}
\address{$^2$ ETH Zurich, Switzerland}
\address{$^3$ Munich Center for Quantum
Science and Technology (MCQST),  M\"unchen, Germany}

\begin{abstract}The quantum Zeno effect, in its original form, uses frequent projective measurements to freeze the evolution of a quantum system that is initially governed by a fixed Hamiltonian. We generalize this effect simultaneously in three directions by allowing open system dynamics, time-dependent evolution equations and general quantum operations in place of projective measurements. More precisely, we study Markovian master equations with bounded generators whose time dependence is Lipschitz continuous. Under a spectral gap condition on the quantum operation, we show how frequent measurements again freeze the evolution outside an invariant subspace. Inside this space the evolution is described by a modified master equation.
\end{abstract}

\maketitle
\tableofcontents

\section{Introduction}\label{sec:intro}
The quantum Zeno effect predicts that frequent projective measurements can slow down and eventually freeze the evolution of a quantum system. It has been theoretically discovered and proven in \cite{Misra.1977} and its surprising consequences, in particular for decaying quantum systems, have been experimentally verified for instance in \cite{Itano.1990, Fischer.2001}. Mathematically, it is intimately related to the Trotter-Kato product formula \cite{Kato.1978}. Both can be proven by using one of the oldest tools in this context---Chernoff's $\sqrt{n}$-Lemma \cite{Chernoff.1968b}.

In this paper, we use Chernoff's Lemma together with standard tools from operator theory  to simultaneously generalize the quantum Zeno effect in three directions: we consider open quantum systems, allow time-dependent evolution equations, and use general quantum operations in place of projective measurements. More specifically, we consider an open quantum system whose time evolution is described by a master equation 
$$ \partial_t \rho(t)=\cL_t\big(\rho(t)\big),\qquad\rho(0)=\rho_0.$$
The generators $\cL_t$ are supposed to be bounded and with Lipschitz-continuous time-dependence. Assuming that this evolution gets repeatedly intercepted by a quantum operation that itself has a well-defined limit $P$ we show that the overall evolution becomes the solution of 
$$ \partial_t \tilde{\rho}(t)=\tilde{\cL}_t\big(\tilde{\rho}(t)\big),\quad\tilde{\rho}(0)=P(\rho_0)\quad\text{with }\tilde{\cL}_t=P\cL_t,$$
in the limit where the frequency of the interception goes to infinity.

The literature on the quantum Zeno effect is vast (see \cite{Schmidt.2003, Facchi.2008} for an overview) and we can only mention those works that appear to be closest to ours. For finite-dimensional quantum systems with time-independent evolution equations, the quantum Zeno effect has been generalized towards more general measurements for Hamiltonian dynamcis in \cite{Li.2013, Lidar.2013, Filippov.2017} and for Lindblad-type dynamics, in parallel to the present work, in \cite{Burgarth.2018a,Burgarth.2018b}. The work \cite{Burgarth.2018b} also allows for finite families of arbitrary quantum operations and in this way follow an interesting route that is not addressed in the present paper.

After completion of this work \cite{Moebus.2018}, we also became aware of the recent preprint \cite{Zimboras.2018} in which essentially the same questions have been addressed. The tools and perspectives slightly differ, but the results seem to be consistent.

\section{Preliminaries}\label{sec:prelim}
This section will introduce  basic concepts and fix the mathematical framework and notation. Throughout, $\cH$ will be a complex separable Hilbert space and $\cB(\cH)$ and $\cT(\cH)$ the spaces of bounded operators and trace-class operators on $\cH$, respectively. The  trace-norm, which makes $\cT(\cH)$ a Banach space, will be written $\norm{A}_1:=\tr{|A|}$ and the space of bounded linear maps $T\in\cB\big(\cT(\cH)\big)$ becomes itself a Banach space with  operator norm $\norm{T}:=\sup_{X\in\cT(\cH)\setminus 0}\norm{T(X)}_1/\norm{X}_1$. All convergence results in this work will be w.r.t.~this norm. In particular, all operator-valued integrals and derivatives are understood in the corresponding uniform operator topology.

We will describe the evolution of quantum systems in the Schr\"odinger picture. In this framework a  \emph{quantum channel} is a completely positive and trace-preserving map $T\in\cB\big(\cT(\cH)\big)$ describing the evolution of a density operator $\rho\mapsto T(\rho)$ in discrete time. We will write the identity channel as $\id:\rho\mapsto\rho$ and call a completely positive map a \emph{quantum operation} if it is trace-non-increasing. We will use the fact that   quantum channels and  quantum operations that have a non-trivial fixed-point space satisfy $\norm{T}= 1$ \cite{Russo.1966}  and $1\in{\rm spec}(T)\subseteq \mathbbm{D}_1$ \cite{Evans.1978} where $\mathbbm{D}_\delta:=\{z\in\mathbbm{C}\;|\;|z|\leq \delta\}$. 

Norm-continuous one-parameter semigroups  have bounded generators \cite{Engel.2000}. In the special case of a semigroup of quantum channels, also known as \emph{quantum dynamical semigroup}, we will denote such a bounded generator by $\cL$. It describes a continuous time evolution via
\begin{equation*}
\partial_t \rho(t)=\cL\big(\rho(t)\big),\qquad \rho(t)=e^{t\cL}\rho(0).
\end{equation*}

Our aim is to study the effect of a frequently repeated measurement or more  general  operation that intercepts an evolution like this in regular time intervals. To this end, the effect of the considered operation will  be described by a quantum operation $M:\cT(\cH)\rightarrow\cT(\cH)$. If $M$ is trace-preserving this means that no post-selection conditioned on the measurement outcome is made. In this case, the measurement outcome will not be used afterwards so that $M$ actually need not be a measurement but can as well be any quantum mechanical time evolution. A central assumption for all of our results is that $M$ has a discrete eigenvalue $1$ which is separated in magnitude from the rest of the spectrum. That is, we assume that 
\begin{equation}\label{eq:gapassumption}
\begin{gathered}
1\in{\rm spec}_d(M)\\
{\rm spec}(M)\subseteq\{1\}\cup\mathbbm{D}_{\delta}\quad \text{for some}\ \delta<1.
\end{gathered}
\end{equation}   
where ${\rm spec}_d$ denotes the discrete spectrum which consists of all isolated points of the spectrum with finite-dimensional Riesz projector (q.v.~Eq.(\ref{eq:spectralprojection})). Note that for a quantum operation $1\in {\rm spec}(M)$ already implies that $\norm{M}=1$, which in turn is necessary for obtaining a non-zero Zeno-limit later on.
Eq.(\ref{eq:gapassumption}) also implies that the curves $\Gamma:[0,2\pi]\ni\varphi\mapsto 1+ e^{i\varphi}(1-\delta)/2$ and $\gamma:[0,2\pi]\ni\varphi\mapsto (1+\delta)e^{i\varphi}/2$ both separate the two parts of the spectrum. Due to the upper semicontinuity of each separated part of the spectrum, there is an $\epsilon>0$ such that $\Gamma$ and $\gamma$ also separate the spectrum of $Me^{t\cL}$ for all $t\in [0,\epsilon]$ (cf.~\cite{Kato.1995} p.212).
So, the assumption in Eq.(\ref{eq:gapassumption})  enables us to define the Riesz projector
\begin{equation}\label{eq:spectralprojection}
P_t:=\frac{1}{2\pi i}\oint_\Gamma \Big(z\;\id - Me^{t\cL}\Big)^{-1} dz\quad\text{for all}\ t\in[0,\epsilon].
\end{equation}
The map $[0,\epsilon]\ni t\mapsto P_t$ is analytic (cf.~\cite{Kato.1995} p.368) and projection-valued (cf.~\cite{Kato.1995} p.178). In particular, $P:=P_0$ is the spectral projector corresponding to the eigenvalue $1$ of $M$ and since $P=\lim_{n\rightarrow\infty}M^n$, $P$ is a quantum operation in its own right.
\section{Time-independent case}\label{sec:t_indep}
\begin{theorem}[Quantum Zeno effect for open systems]\label{thm:1} Let $\cL$ be a bounded generator of a quantum dynamical semigroup on $\cT(\cH)$. If $M$ is a quantum operation on $\cT(\cH)$ that satisfies the spectral condition in Eq.(\ref{eq:gapassumption}) and whose spectral projector corresponding to the eigenvalue $1$ is denoted by P, then for all $t\geq 0$:
\begin{equation}
\norm{\left(M e^{t \frac{\cL}{n} }\right)^n - e^{t P\cL P}P}\rightarrow 0\qquad\text{for }n\rightarrow\infty.
\end{equation}
\end{theorem}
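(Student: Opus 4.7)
The plan is to combine the spectral-gap framework of the preliminaries with a short Chernoff-style exponentiation argument. Writing $C_n := Me^{t\cL/n}$, the gap in Eq.(\ref{eq:gapassumption}) persists along the family $\{C_n\}$ once $n \geq t/\epsilon$, so the spectral projection $P_n := P_{t/n}$ from Eq.(\ref{eq:spectralprojection}) is well-defined, commutes with $C_n$, and is analytic in $1/n$ with $\|P_n - P\| = O(1/n)$. Setting $Q_n := \id - P_n$, this commutativity yields the clean decomposition
\begin{equation*}
C_n^n \;=\; (P_n C_n P_n)^n + (Q_n C_n Q_n)^n,
\end{equation*}
and the strategy is to show that the first summand converges in operator norm to $e^{tP\cL P}P$ while the second vanishes.

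For the $Q_n$-summand I would use the Dunford integral
\begin{equation*}
(Q_n C_n Q_n)^n \;=\; \frac{1}{2\pi i}\oint_\gamma z^n\,(z\,\id - C_n)^{-1}\,dz
\end{equation*}
with $\gamma$ the circle of radius $(1+\delta)/2$ from the preliminaries. Since the resolvent is uniformly bounded on $\gamma$ for all sufficiently large $n$ (by the upper semicontinuity argument already invoked in Section~\ref{sec:prelim}) and $|z|^n \leq \big((1+\delta)/2\big)^n$ on $\gamma$, the integral decays exponentially in $n$.

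For the $P_n$-summand I introduce the auxiliary operator $A_n := P_n C_n P_n + Q_n$. Because $(P_n C_n P_n)Q_n = Q_n(P_n C_n P_n) = 0$ and $Q_n^2 = Q_n$, one obtains $A_n^n = (P_n C_n P_n)^n + Q_n$ for every $n$. Moreover $A_n - \id = P_n(C_n - \id)P_n$, and Taylor expanding $C_n - \id = (M - \id) + \tfrac{t}{n}M\cL + O(1/n^2)$ together with the identities $P(M-\id) = (M-\id)P = 0$ (which hold because $P = \lim_k M^k$ exists, so the eigenvalue $1$ of $M$ is semisimple) and $\|P_n - P\| = O(1/n)$ yields
\begin{equation*}
n(A_n - \id) \;=\; tP\cL P + O(1/n)
\end{equation*}
in operator norm. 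Since also $\|A_n - \id\| = O(1/n)$, the principal logarithm of $A_n$ exists for large $n$ and $n\log A_n = n(A_n - \id) + O(1/n) \to tP\cL P$, so continuity of the exponential gives $A_n^n = e^{n\log A_n} \to e^{tP\cL P}$. Therefore $(P_n C_n P_n)^n = A_n^n - Q_n \to e^{tP\cL P} - (\id - P) = e^{tP\cL P}P$, the last equality holding because $tP\cL P$ annihilates $\ker P$.

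The main technical obstacle I anticipate is the uniform operator-norm control needed throughout: both the resolvent bound on $\gamma$ in the decay estimate and, more subtly, the cancellation showing that the naively $O(1)$ contribution $nP_n(M-\id)P_n$ inside $n(A_n - \id)$ actually collapses to $O(1/n^2)$, which rests on sandwiching the annihilation $P(M-\id) = (M-\id)P = 0$ with the $O(1/n)$ perturbation $P_n - P$. Combining the two summands then yields $C_n^n \to e^{tP\cL P}P$ in operator norm, as claimed.
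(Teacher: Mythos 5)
Your argument is correct, and it shares the paper's overall architecture while replacing one key tool. The splitting $C_n^n=(P_nC_nP_n)^n+(Q_nC_nQ_n)^n$ together with the Dunford-integral bound $\|(Q_nC_nQ_n)^n\|\leq\big(\tfrac{1+\delta}{2}\big)^{n+1}\sup_{z\in\gamma}\|(z\,\id-C_n)^{-1}\|$ is exactly the content of the paper's Lem.\ref{lem:1}, which writes the difference $C_n^n-(P_nC_nP_n)^n$ directly as that contour integral. Your cancellation $nP_n(C_n-\id)P_n\to tP\cL P$, resting on $P(M-\id)=(M-\id)P=0$ sandwiched against the $O(1/n)$ perturbation $P_n-P$, is likewise the same computation as in the paper's Lem.\ref{lem:3}, there phrased via the identity $P'=PP'+P'P$. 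Where you genuinely diverge is the final exponentiation step: the paper applies Chernoff's $\sqrt{n}$-Lemma to the contraction $P_nC_nP_n$ viewed as an operator on the ($n$-dependent) Banach space $P_n\cT(\cH)$ with identity $P_n$, whereas you pad with $Q_n$ to form $A_n=P_nC_nP_n+Q_n$ and exponentiate via the principal logarithm on the full space. Your route avoids both the contraction hypothesis and the slight awkwardness of working on a varying subspace, and it yields a better rate for this step ($O(1/n)$ rather than the $O(\sqrt{n}\,\|C-\1\|)=O(1/\sqrt{n})$ that Chernoff's inequality gives); the paper's route avoids logarithms and power-series manipulations entirely. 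One small slip in your last paragraph: $nP_n(M-\id)P_n$ collapses to $O(1/n)$, not $O(1/n^2)$ --- it is $P_n(M-\id)P_n$ itself that is $O(1/n^2)$, since both the left and the right annihilation each contribute one factor of $1/n$ --- but this does not affect the argument, as $O(1/n)$ is all you need.
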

For finite-dimensional $\cH$ this result is also implied by  \cite{Burgarth.2018b} (Thm.1 therein).

We divide the proof of Thm.\ref{thm:1} into two parts, which are stated as Lem.\ref{lem:1} and Lem.\ref{lem:3} below. Thm.\ref{thm:1} is then obtained by combining the two Lemmas via the triangle inequality. In the reminder of this section, we will for ease of notation absorb the time parameter of $t\cL$ into $\cL$.
\begin{lemma}\label{lem:1} Under the assumptions of Thm.\ref{thm:1}  and with $P_t$ defined as in Eq.(\ref{eq:spectralprojection}): 
\begin{equation}
\norm{\left(M e^{\frac{\cL}{n} }\right)^n-\left(P_{\frac{1}{n}}M e^{\frac{\cL}{n}}P_{\frac{1}{n}}\right)^n}\rightarrow 0\qquad\text{for }n\rightarrow\infty.
\end{equation}
\end{lemma}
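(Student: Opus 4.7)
\textbf{Proof plan for Lemma \ref{lem:1}.} The plan is to exploit the fact that $P_{1/n}$ is a Riesz spectral projection of $Me^{\cL/n}$ and therefore commutes with $Me^{\cL/n}$. This will let me rewrite the second term as $P_{1/n}(Me^{\cL/n})^n$, so the whole difference collapses to $(\id - P_{1/n})(Me^{\cL/n})^n$, which I can control by a functional calculus estimate on the small-spectrum part.

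First, because $P_{1/n}$ is defined by a contour integral of the resolvent of $Me^{\cL/n}$, it lies in the bicommutant of $Me^{\cL/n}$ and in particular commutes with it. Combined with $P_{1/n}^2=P_{1/n}$ this gives $P_{1/n}Me^{\cL/n}P_{1/n}=P_{1/n}Me^{\cL/n}$ and, by iterating and collecting the projections on the left,
\begin{equation*}
\bigl(P_{1/n}Me^{\cL/n}P_{1/n}\bigr)^{n}=P_{1/n}\bigl(Me^{\cL/n}\bigr)^{n}.
\end{equation*}
Hence the quantity to bound becomes $\bigl\|(\id-P_{1/n})(Me^{\cL/n})^{n}\bigr\|$, and since $\id-P_{1/n}$ also commutes with $Me^{\cL/n}$, the Riesz functional calculus applied to $f(z)=z^{n}$ along the contour $\gamma$ (which separates the small-modulus part of the spectrum by the discussion preceding Eq.~(\ref{eq:spectralprojection})) yields
\begin{equation*}
(Me^{\cL/n})^{n}(\id-P_{1/n})=\frac{1}{2\pi i}\oint_{\gamma}z^{n}\bigl(z\,\id-Me^{\cL/n}\bigr)^{-1}dz.
\end{equation*}

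Next I would take norms. On $\gamma$ one has $|z|\leq(1+\delta)/2<1$, and the length of $\gamma$ is $\pi(1+\delta)$, so everything depends on a uniform (in $n$) bound for the resolvent on $\gamma$. This is where the preparatory remark preceding Eq.~(\ref{eq:spectralprojection}) pays off: since $\cL$ is bounded, $Me^{\cL/n}\to M$ in operator norm; the curve $\gamma$ lies entirely in the resolvent set of $M$, and by upper semicontinuity of the spectrum together with continuity of the resolvent on compact subsets of the resolvent set, there is $n_{0}$ and a constant $C<\infty$ such that $\sup_{z\in\gamma}\|(z\,\id-Me^{\cL/n})^{-1}\|\leq C$ for all $n\geq n_{0}$. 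Putting the pieces together gives the geometric estimate
\begin{equation*}
\bigl\|(\id-P_{1/n})(Me^{\cL/n})^{n}\bigr\|\leq\frac{C(1+\delta)}{2}\left(\frac{1+\delta}{2}\right)^{n}\xrightarrow{n\to\infty}0,
\end{equation*}
which proves the lemma.

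The only real obstacle is the uniform resolvent bound along $\gamma$; everything else is algebraic manipulation of a spectral projection. I would therefore write the argument in the order: (i) commutation and idempotency of $P_{1/n}$ with $Me^{\cL/n}$; (ii) reduction to $(\id-P_{1/n})(Me^{\cL/n})^{n}$; (iii) contour representation via holomorphic functional calculus; (iv) uniform resolvent bound from $Me^{\cL/n}\to M$ and compactness of $\gamma$; (v) geometric decay.
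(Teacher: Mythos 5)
Your proposal is correct and follows essentially the same route as the paper: the difference is identified with $\frac{1}{2\pi i}\oint_\gamma z^n(z\,\id-Me^{\cL/n})^{-1}dz$ via the holomorphic functional calculus along $\gamma$, and then bounded by a uniform resolvent estimate times the geometric factor $((1+\delta)/2)^{n}$. The only (immaterial) difference is how the uniform resolvent bound is obtained --- you use norm convergence $Me^{\cL/n}\to M$, while the paper takes the supremum over the compact set $\gamma\times[0,1/n]$ and notes it is attained and non-increasing in $n$; you also make explicit the commutation/idempotency step that the paper leaves implicit.
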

\begin{proof} We assume $n\geq 1/\epsilon$ and use the spectrum-separating curve $\gamma:[0,2\pi]\ni\varphi\mapsto (1+\delta)e^{i\varphi}/2$ for the holomorphic functional calculus (cf.~\S 2.3 in \cite{Simon.2015}). This allows us to write
\begin{eqnarray}
\norm{\left(M e^{\frac{\cL}{n} }\right)^n-\left(P_{\frac{1}{n}}M e^{\frac{\cL}{n}}P_{\frac{1}{n}}\right)^n} &=& \norm{\frac{1}{2\pi i}\oint_\gamma z^n \left(z\;\id - M e^{\frac{\cL}{n} }\right)^{-1} dz}\nonumber\\
&\leq& \left(\frac{1+\delta}{2}\right)^{n+1}\sup_{z,\tau} \norm{\left(z\;\id - M e^{\tau \cL }\right)^{-1}},\nonumber 
\end{eqnarray}
where the supremum is taken over the compact set $(z,\tau)\in\gamma\times [0,1/n]$ on which $(z\;\id - M e^{\tau \cL })$ is bijective. Since the inverse is continuous on the set of bijective bounded operators, the supremum is attained at a maximum, say $s(n)<\infty
$. The fact that $s(n)$ is non-increasing in $n$ together with $(1+\delta)/2<1$ then completes the proof.
\end{proof}

For the next step we need the following auxiliary Lemma by Chernoff \cite{Chernoff.1968b}:
\begin{lemma}[Chernoff's $\sqrt{n}$-Lemma]\label{lem:Chernoff} Let $X$ be a Banach space, $C\in\cB(X)$ a contraction and $\1$ the identity map on $X$. Then
\begin{equation}\label{eq:Chernoff}
\norm{C^n-e^{n(C-\1)}}\leq \sqrt{n}\norm{C-\1}\qquad \forall n\in\mathbbm{N},
\end{equation}
where $\norm{\cdot}$ is the operator norm on $\cB(X)$.
\end{lemma}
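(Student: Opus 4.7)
The plan is to reduce the operator bound to a scalar bound by exploiting the power series expansion of the exponential. First I would write
$$e^{n(C-\1)}=e^{-n}e^{nC}=e^{-n}\sum_{k=0}^{\infty}\frac{n^k}{k!}C^k,$$
which makes sense in operator norm because the series converges absolutely. Since $e^{-n}\sum_{k=0}^{\infty}n^k/k!=1$, the same factor lets me rewrite $C^n$ as a Poisson average of copies of itself, so that
$$C^n-e^{n(C-\1)}=e^{-n}\sum_{k=0}^{\infty}\frac{n^k}{k!}\bigl(C^n-C^k\bigr).$$

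Next I would handle each summand by a telescoping identity. For integers $k,n$ with, say, $k\leq n$, one has $C^n-C^k=C^k\sum_{j=0}^{n-k-1}C^j(C-\1)$, and the analogous identity with the roles swapped when $k>n$. Using that $C$ is a contraction, every intermediate $C^j$ and the leading $C^k$ (or $C^n$) has norm at most $1$, giving the clean bound $\|C^n-C^k\|\leq |n-k|\,\|C-\1\|$. Plugging this in yields
$$\norm{C^n-e^{n(C-\1)}}\leq \|C-\1\|\;e^{-n}\sum_{k=0}^{\infty}\frac{n^k}{k!}|n-k|.$$

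The final step is to estimate the scalar quantity on the right. Recognising $e^{-n}n^k/k!$ as the probability mass function of a Poisson variable $X$ with mean $n$, the sum is exactly the mean absolute deviation $\mathbbm{E}|X-n|$. By the Cauchy-Schwarz (equivalently Jensen) inequality,
$$\mathbbm{E}|X-n|\leq \sqrt{\mathbbm{E}(X-n)^2}=\sqrt{\operatorname{Var}(X)}=\sqrt{n},$$
which delivers the claimed bound $\sqrt{n}\,\|C-\1\|$.

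The only mildly delicate point is the telescoping estimate in the second step, where one has to be careful to handle both cases $k\leq n$ and $k>n$ and to use the contraction property at each factor; everything else is a straightforward manipulation of absolutely convergent series together with the standard Poisson variance computation.
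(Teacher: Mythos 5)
Your proof is correct and complete: the factorization $e^{n(C-\1)}=e^{-n}e^{nC}$, the telescoping bound $\norm{C^n-C^k}\leq |n-k|\norm{C-\1}$ via the contraction property, and the Poisson mean-absolute-deviation estimate $\mathbbm{E}|X-n|\leq\sqrt{\mathrm{Var}(X)}=\sqrt{n}$ are exactly the ingredients of Chernoff's classical argument. The paper does not prove this lemma itself but only cites Chernoff's original article, and your proposal reproduces that standard proof, so there is nothing further to compare.
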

With this we can show the following remaining step in the proof of Thm.\ref{thm:1}:
\begin{lemma}\label{lem:3} Under the assumptions of Thm.\ref{thm:1}  and with $P_t$ defined as in Eq.(\ref{eq:spectralprojection}): \begin{equation}\label{eq:lem3}
\norm{\left(P_{\frac{1}{n}} M e^{\frac{\cL}{n}}P_{\frac{1}{n}}\right)^n -  e^{P\cL P}P}\rightarrow 0\qquad\text{for }n\rightarrow\infty.
\end{equation}
\end{lemma}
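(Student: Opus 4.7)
My plan is to reduce the statement to an application of Chernoff's Lemma to $Me^{\cL/n}$ restricted to its invariant subspace $V_n:=\mathrm{range}(P_{1/n})$. The starting observation is that $P_{1/n}$, being the spectral projection of $Me^{\cL/n}$, commutes with $Me^{\cL/n}$; hence $V_n$ is invariant for $Me^{\cL/n}$, the restriction $A_n:=Me^{\cL/n}|_{V_n}$ is a contraction on the Banach space $V_n$ (with the inherited trace norm), and
\[
\left(P_{\frac{1}{n}}Me^{\frac{\cL}{n}}P_{\frac{1}{n}}\right)^n = A_n^n\, P_{\frac{1}{n}}.
\]

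Next, using $MP=PM=P$ (which follows from $P=\lim_k M^k$) and the analyticity of $t\mapsto P_t$ at $t=0$, a first-order Taylor expansion yields
\[
(Me^{\cL/n}-\1)\,P_{1/n} \;=\; \tfrac{1}{n}L_\infty + O(1/n^2), \qquad L_\infty := (M-\1)P' + M\cL P,
\]
with $P':=\tfrac{d}{dt}P_t|_{t=0}$. Consequently $\norm{A_n-\1_{V_n}}=O(1/n)$, and Chernoff's $\sqrt{n}$-Lemma on the Banach space $V_n$ gives $\norm{A_n^n-e^{n(A_n-\1_{V_n})}}_{V_n\to V_n}=O(1/\sqrt{n})$. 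Writing $L_n:=n(Me^{\cL/n}-\1)P_{1/n}$, which vanishes on $\ker P_{1/n}$ and equals $n(A_n-\1_{V_n})$ on $V_n$, and using boundedness of $\norm{P_{1/n}}$ (again by the analyticity of $P_t$), this upgrades to $\norm{C_n^n-e^{L_n}P_{1/n}}\to 0$ with $C_n:=P_{1/n}Me^{\cL/n}P_{1/n}$.

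The same Taylor expansion shows $L_n\to L_\infty$ in operator norm, hence $e^{L_n}P_{1/n}\to e^{L_\infty}P$, and it remains to identify this limit with $e^{P\cL P}P$. For that I would differentiate the commutation $P_t Me^{t\cL}=Me^{t\cL}P_t$ at $t=0$, which yields $P'M+P\cL=M\cL P+MP'$. For $y\in\mathrm{range}(P)$, so that $Py=My=y$, this rearranges to $L_\infty y=(M-\1)P'y+M\cL y=P\cL y=(P\cL P)y$. Hence $L_\infty$ preserves $\mathrm{range}(P)$ and coincides there with $P\cL P$, so $L_\infty^kP=(P\cL P)^kP$ for every $k\ge 0$ by induction, and consequently $e^{L_\infty}P=e^{P\cL P}P$, finishing the proof.

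The step I expect to be least routine is the final algebraic identification $L_\infty|_{\mathrm{range}(P)}=P\cL P|_{\mathrm{range}(P)}$: a priori $L_\infty$ has a nonzero off-diagonal component $(M-\1)P'$ with respect to the splitting $\mathrm{range}(P)\oplus\ker P$, and only the first-order identity obtained by differentiating the spectral-projection commutation kills it on $\mathrm{range}(P)$. Everything else reduces to the invariance of $V_n$ under $Me^{\cL/n}$, one Taylor expansion, and a single application of Chernoff's Lemma.
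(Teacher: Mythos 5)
Your proof is correct and follows essentially the same route as the paper: Chernoff's $\sqrt{n}$-Lemma applied on the invariant subspace $P_{\frac1n}\cT(\cH)$ together with a first-order expansion of $Me^{\cL/n}P_{\frac1n}$, using $MP=P$ and the analyticity of $t\mapsto P_t$. The only cosmetic difference lies in how the limit generator is identified with $P\cL P$: the paper works with the symmetrized expression $P_{\frac1n}Me^{\cL/n}P_{\frac1n}$ and the identity $P'=PP'+P'P$ obtained from $P_t^2=P_t$, whereas you use the one-sided expression and the identity obtained by differentiating $P_tMe^{t\cL}=Me^{t\cL}P_t$ — both are immediate consequences of the spectral-projection structure and yield the same conclusion.
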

\begin{proof}
Again, we assume $n\geq 1/\epsilon$ so that the projection $P_\frac{1}{n}$ is well-defined. We aim at applying Chernoff's inequality from Lem.\ref{lem:Chernoff} with $X:=P_\frac{1}{n}\cT(\cH)$. Since $X$ is the kernel of $(\id-P_\frac{1}{n})$, it is a closed linear subspace of $\cT(\cH)$ and thus itself a Banach space when equipped with the norm $\norm{\cdot}_1$. Then $\1:=P_\frac{1}{n}$ is the identity in $\cB(X)$ and $C:=P_\frac{1}{n} Me^{\cL/n}P_\frac{1}{n}$ becomes a contraction in $\cB(X)$ to which we apply Lem.\ref{lem:Chernoff}. The central quantity to analyze then becomes
\begin{equation}
n(C-\1)= P_\frac{1}{n}M\cL P_\frac{1}{n}+n\left(P_\frac{1}{n} M P_\frac{1}{n}-P_\frac{1}{n}\right)+{\mathcal{O}}\left(\frac{1}{n}\right),\label{eq:nC1}
\end{equation} where we expanded the exponential $e^{\cL/n}$ to obtain the r.h.s. Before continuing with this expression, let us define $P':=\partial_t P_t\big|_{t=0}$ and note that exploiting $P_t=P_t^2$ before taking the derivative with the product rule implies $P'=PP'+P'P$. Additionally, the Riesz projector $P$ decomposes into the spectral projector w.r.t.~ the discrete eigenvalue $1$ and a nilpotent operator (cf.~\cite{Simon.2015} Cor.~2.3.6). The nilpotent part is zero because $M$ is a contraction (cf.~\cite{hasenoehrl.2020} p.57) so that $PM=P$. Therefore, expanding $ P_\frac{1}{n}=P+\frac{1}{n}P'+\mathcal{O}(1/n^2)$ yields
$$ n\left(P_\frac{1}{n} M P_\frac{1}{n}-P_\frac{1}{n}\right) = PMP'+P'MP-P' +\mathcal{O}(1/n)=\mathcal{O}(1/n).$$ 
Inserting this into Eq.(\ref{eq:nC1}), we obtain $n(C-\1)\rightarrow P\cL P$ for $n\rightarrow\infty$.
This in turn implies that the r.h.s. of Eq.(\ref{eq:Chernoff}) vanishes asymptotically so that Chernoff's Lemma finally leads to Eq.(\ref{eq:lem3}).
\end{proof}
Combining Lem.\ref{lem:1} with Lem.\ref{lem:3} via the triangle inequality then also completes the proof of Thm.\ref{thm:1}. 

Before extending the result to the time-dependent case, let us briefly discuss the generator $\tilde{\cL}:=P\cL P$ that appears in the theorem and that will also show up in the time-dependent case discussed in the next section. In general,  $\tilde{\cL}$ will not generate a quantum dynamical semigroup even if $P$ is a quantum channel, since complete positivity may be violated. However, this becomes cured when the evolution is restricted to $P\cT(\cH)$. On this restricted space, however, many generators that act differently on $\cT(\cH)$ lead to the same evolution. We may for instance choose $\tilde{\cL}\in\{P\cL P, P\cL,P(\id+\cL)-\id\}$ and they all lead to identical $e^{\tilde{\cL}}P$. Clearly, there are many more choices and depending on $\cL$ and $P$, these may or may not generate a completely positive evolution on $\cT(\cH)$.
For a particular example see \cite{Burgarth.2018a} (remark 4, example 1).

\section{Time-dependent case}\label{sec:t_dep}
In this section, we consider  time-dependent scenarios where the starting point is the solution of the time-dependent master equation
\begin{equation}\label{eq:ODEt}
\partial_t\rho(t)=\cL_t\big(\rho(t)\big),\qquad \rho(0)=\rho_0
\end{equation} in some finite time-interval $t\in[0,\tau]$. Throughout we will assume that $t\mapsto\cL_t\in\cB\big(\cT(\cH)\big)$ is an $L-$Lipschitz map into the set of bounded generators of quantum dynamical semigroups. The solution of Eq.(\ref{eq:ODEt}) can be expressed with the help of \emph{propagators} (a.k.a. evolution operators). These form a two-parameter family of quantum channels $T_{[t,s]}$ for $0\leq s\leq t\leq \tau$ so that $\rho(t)=T_{[t,s]}\big(\rho(s)\big)$. The propagators then satisfy (cf.~\cite{Cauchy.1984} Sec.7.1)
\begin{equation}\label{eq:T}
T_{[t,t]}=\id,\quad T_{[t,r]}T_{[r,s]}=T_{[t,s]},\quad {\partial_s} T_{[t,s]}=-T_{[t,s]}\cL_s,
\end{equation} for $t\geq r\geq s$ and we can regard the derivative in Eq.(\ref{eq:T})  in the uniform operator topology.

As before, we will intercept the evolution by applying a quantum operation $M$ in $n$ regular time-intervals and we are interested in the asymptotic behavior of the resulting evolution that is  described by the quantum operation 
\begin{equation}
T_n:=\prod_{i=1}^n \left( M T_{\big[\frac{i}{n}\tau,\frac{i-1}{n}\tau\big]}\right).
\end{equation} 
Here and in the following, products of this form have to be understood time-ordered, i.e., maps are composed in the order of the corresponding times. Our main result is the following natural extension of Thm.\ref{thm:1}:
\begin{theorem}[Zeno effect for time-dependent master equations]\label{thm:2}
Let $t\mapsto\cL_t\in\cB\big(\cT(\cH)\big)$ be a Lipschitz continuous map into the set of bounded generators of quantum dynamical semigroups and $M\in\cB\big(\cT(\cH)\big)$ be a quantum operation that satisfies the spectral condition in Eq.(\ref{eq:gapassumption}). Then the limit $\tilde{\rho}(\tau):=\lim_{n\rightarrow\infty}T_n(\rho_0)$ exists and coincides with the solution of 
\begin{equation}\label{eq:MEtilde}
\partial_t \tilde{\rho}(t)=P\cL_t P\big(\tilde{\rho}(t)\big),\quad \tilde{\rho}(0)=P(\rho_0),
\end{equation} 
where $P$ is the spectral projector corresponding to the eigenvalue $1$ of $M$.
\end{theorem}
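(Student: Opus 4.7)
The plan is to reduce to the time-independent Theorem~\ref{thm:1} by a two-step approximation: first replace the propagators in $T_n$ by semigroup exponentials, then approximate $\cL_t$ by a generator that is piecewise-constant on $m$ subintervals of $[0,\tau]$. For the first step, Duhamel's formula combined with the $L$-Lipschitz bound $\|\cL_r-\cL_s\|\leq L|r-s|$ gives $\|T_{[s+h,s]}-e^{h\cL_s}\|\leq Lh^2/2$, so telescoping through the $n$ factors of $T_n$ (each of norm $\leq 1$) yields $\|T_n-T_n^{\mathrm{exp}}\|=\mathcal{O}(1/n)$ with $T_n^{\mathrm{exp}}:=\prod_{i=1}^n Me^{(\tau/n)\cL_{(i-1)\tau/n}}$.

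For each $m$, let $\cL_t^{(m)}$ be the piecewise-constant generator that equals $\cL_{(j-1)\tau/m}$ on the $j$-th of $m$ equal subintervals of $[0,\tau]$, and let $\tilde T_n^{(m)}$ be the analogous object. Since $\|\cL_t-\cL_t^{(m)}\|\leq L\tau/m$ and $\|e^A-e^B\|\leq\|A-B\|e^{\max(\|A\|,\|B\|)}$, a further telescoping yields $\|T_n^{\mathrm{exp}}-\tilde T_n^{(m)}\|=\mathcal{O}(1/m)$, uniformly in $n$. The factors of $\tilde T_n^{(m)}$ now group by constancy interval into $m$ blocks, the $j$-th of which is $(Me^{(\tau/n)\cL_{(j-1)\tau/m}})^{k_j(n)}$ with $k_j(n)\tau/n\to\tau/m$ as $n\to\infty$. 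Comparing this block with the analogous block in which the step-size $\tau/n$ is replaced by $\tau/(m\,k_j(n))$---so that the total time within each block is exactly $\tau/m$---costs only $\mathcal{O}(1/n)$ per block (since $k_j(n)\,|\tau/n-\tau/(m k_j(n))|=\mathcal{O}(1/n)$), after which Thm.~\ref{thm:1} applied at the fixed time $\tau/m$ drives each block to $e^{(\tau/m)P\cL_{(j-1)\tau/m}P}P$. Since $P$ commutes with every $P\cL_tP$ and $P^2=P$, the $m$ limiting blocks collapse to $\bigl(\prod_{j=1}^{m}e^{(\tau/m)P\cL_{(j-1)\tau/m}P}\bigr)P$, which is the propagator of the piecewise-constant variant of~(\ref{eq:MEtilde}) followed by $P$. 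Standard ODE perturbation, together with $\|P\cL_t^{(m)}P-P\cL_tP\|=\mathcal{O}(1/m)$, identifies the limit of this collapsed product as $m\to\infty$ with the solution operator of~(\ref{eq:MEtilde}) composed with $P$.

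These estimates are combined by a standard $\epsilon$-argument: fix $m$ large enough to control the two $\mathcal{O}(1/m)$ errors, then $n$ large enough to control the $\mathcal{O}(1/n)$ errors and the per-block convergence from Thm.~\ref{thm:1}. Evaluated on $\rho_0$ this gives $\lim_{n\to\infty}T_n(\rho_0)=\tilde\rho(\tau)$, as claimed. The main technical subtlety I expect is the mismatch between the Zeno grid and the constancy grid: when $n$ is not a multiple of $m$ the block sizes $k_j(n)$ are only approximately $n/m$, which is precisely what forces the rescaling step above so that Thm.~\ref{thm:1} can be invoked at a single fixed time rather than having to be strengthened to be uniform in the time parameter.
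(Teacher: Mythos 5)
Your proposal is correct and follows essentially the same route as the paper: a two-scale subdivision in which the propagators are replaced by exponentials of a piecewise-constant (on $m$ coarse blocks) generator via telescoping and the Lipschitz bound, Theorem~\ref{thm:1} is applied blockwise in the fine limit $n\to\infty$, and the two $\mathcal{O}(1/m)$ and $\mathcal{O}(1/n)$ errors are combined in an iterated-limit argument. The only organizational differences are that the paper merges your two telescoping steps into a single estimate (Lemma~\ref{lem4} plus the step function $\theta$) and handles the grid mismatch implicitly through $\theta$, whereas you treat it explicitly by rescaling the step size within each block---both are fine.
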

Before coming to the proof of Thm.\ref{thm:2} we will establish the main Lemma that allows us to approximate the evolution that solves Eq.(\ref{eq:ODEt}) with piece-wise constant generators:
\begin{lemma}\label{lem4} If $t\mapsto\cL_t$ is $L$-Lipschitz and $T_{[t,s]}$ the family of evolution operators of the corresponding time-dependent master equation, then for all $t,s,\delta\geq 0$:
\begin{equation}\label{eq:Lem4}
\norm{T_{[t+\delta,t]}-e^{\delta \cL_s}}\leq L\left(\delta|t-s|+\frac{\delta^2}{2}\right).
\end{equation}
\end{lemma}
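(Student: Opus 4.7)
The plan is to telescope between $T_{[t+\delta,t]}$ and $e^{\delta\cL_s}$ through an interpolating family of operators and then bound the infinitesimal mismatch by the Lipschitz modulus of $\cL$. Concretely, I would introduce
\begin{equation*}
g(r):=T_{[t+\delta,r]}\,e^{(r-t)\cL_s}\qquad\text{for }r\in[t,t+\delta],
\end{equation*}
so that $g(t)=T_{[t+\delta,t]}$ and $g(t+\delta)=e^{\delta\cL_s}$. The two operators of interest then differ by $-\int_t^{t+\delta}\partial_r g(r)\,dr$, which reduces the problem to estimating a single derivative.

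Next I would compute $\partial_r g(r)$ via the product rule, using $\partial_r T_{[t+\delta,r]}=-T_{[t+\delta,r]}\cL_r$ from Eq.(\ref{eq:T}) together with $\partial_r e^{(r-t)\cL_s}=\cL_s e^{(r-t)\cL_s}$. The two contributions collapse to the identity
\begin{equation*}
\partial_r g(r)=T_{[t+\delta,r]}\,(\cL_s-\cL_r)\,e^{(r-t)\cL_s},
\end{equation*}
which makes the role of the Lipschitz hypothesis transparent: only the difference $\cL_s-\cL_r$ survives.

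To bound the resulting integral I would exploit that $T_{[t+\delta,r]}$ and $e^{(r-t)\cL_s}$ are quantum channels and hence contractions on $\cT(\cH)$. Submultiplicativity of the operator norm and the bound $\norm{\cL_s-\cL_r}\leq L|r-s|$ then give
\begin{equation*}
\norm{T_{[t+\delta,t]}-e^{\delta\cL_s}}\leq L\int_t^{t+\delta}|r-s|\,dr.
\end{equation*}

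The remaining task is purely calculus: evaluate $\int_t^{t+\delta}|r-s|\,dr$. This is where a short case analysis depending on the position of $s$ relative to $[t,t+\delta]$ is needed. For $s\leq t$ the integral equals $\delta(t-s)+\delta^2/2$; for $s\geq t+\delta$ it equals $\delta(s-t)-\delta^2/2$; and for $s\in[t,t+\delta]$, writing $a=s-t$, it equals $a^2-a\delta+\delta^2/2\leq\delta^2/2$. In every case the value is dominated by $\delta|t-s|+\delta^2/2$, yielding Eq.(\ref{eq:Lem4}). I expect this case split to be the only mildly delicate point, since the naive uniform bound $|r-s|\leq|t-s|+\delta$ produces $\delta|t-s|+\delta^2$, which is off by a factor of two in the second term; one genuinely needs to integrate $|r-s|$ piecewise to recover the claimed constant $1/2$.
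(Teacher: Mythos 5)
Your proof is correct, but it follows a genuinely different route from the paper's. The paper first splits off an intermediate term via the triangle inequality, $\norm{T_{[t+\delta,t]}-e^{\delta \cL_s}}\leq \norm{T_{[t+\delta,t]}-e^{\delta \cL_t}}+\norm{e^{\delta \cL_t}-e^{\delta \cL_s}}$, and then handles each summand with its own interpolation: the family $r\mapsto T_{[t+\delta,r]}e^{(r-t)\cL_t}$ for the first term (giving exactly $L\delta^2/2$, since the integrand involves $|r-t|$) and the family $r\mapsto e^{(1-r)\delta\cL_s}e^{r\delta\cL_t}$ for the second (giving $\delta L|t-s|$). You instead interpolate in a single step between the two target operators via $g(r)=T_{[t+\delta,r]}e^{(r-t)\cL_s}$, which lands you on the integral $L\int_t^{t+\delta}|r-s|\,dr$ and forces the three-way case analysis on the position of $s$ relative to $[t,t+\delta]$; your evaluations of the three cases are all correct and each is dominated by $\delta|t-s|+\delta^2/2$. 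The trade-off is that the paper's two-step split avoids any case distinction (each interpolation produces an integrand whose sign structure is fixed), while your one-step version is more economical in bookkeeping but, as you rightly note, requires the piecewise integration to recover the constant $1/2$ — and in the regimes $s\geq t+\delta$ and $s\in[t,t+\delta]$ it actually yields a slightly sharper bound than the stated one.
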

\begin{proof}
We divide the estimate into two parts given by the following triangle inequality:
\begin{equation}
\norm{T_{[t+\delta,t]}-e^{\delta \cL_s}}\leq \norm{T_{[t+\delta,t]}-e^{\delta \cL_t}}+\norm{e^{\delta \cL_t}-e^{\delta \cL_s}}\label{eq:triangle}.
\end{equation}
In order to bound the first term, we invoke the fundamental theorem of calculus, which together with Eq.(\ref{eq:T}) yields
\begin{eqnarray}
T_{[t+\delta,t]}-e^{\delta\cL_t}&=&-\int_t^{t+\delta} \partial_r\left(T_{[t+\delta,r]}e^{(r-t)\cL_t}\right) dr\nonumber\\
&=& \int_t^{t+\delta}T_{[t+\delta,r]}\big(\cL_r -\cL_t\big) e^{(r-t)\cL_t} dr.\nonumber
\end{eqnarray} 
Using that $\norm{T_{[t+\delta,r]}}=\norm{e^{(r-t)\cL_t}}=1$ and $\norm{\cL_r - \cL_t}\leq L|r-t|$ then proves that the first term on the r.h.s. of Eq.(\ref{eq:triangle}) is bounded by $L\delta^2/2$. In a similar vein, we can bound the second term by $\delta L|t-s|$ using
\begin{eqnarray}
e^{\delta\cL_t}-e^{\delta\cL_s}&=&\int_0^1 \partial_r\left( e^{(1-r)\delta\cL_s}e^{r\delta\cL_t}\right) dr \nonumber\\
&=&\delta\int_0^1 e^{(1-r)\delta\cL_s} \big(\cL_t-\cL_s\big)e^{r\delta\cL_t} dr \nonumber.
\end{eqnarray}
\end{proof}
\begin{proof}[Proof of Thm. 2] For ease of notation but w.l.o.g., we set $\tau=1$ so that $T_n$ describes an evolution for the time-interval $[0,1]$. The main idea of the proof is to use two levels of sub-divisions of this  time-interval: on the coarser level, we divide the interval into $m$ subintervals of roughly equal size within which we approximate the evolution governed by the initial master equation by using constant generators. On the finer level of $n\gg m$ partitions we then exploit Thm.\ref{thm:1} to obtain the claimed result up to an approximation error that vanishes in the limit of large $m$.  

More specifically, we partition $[n]:=\{1,\ldots, n\}$ into $m$ subsets with the help of a  step function $\theta:[n]\rightarrow[0,1], \theta(i):= \frac1m \lceil\frac{im}{n}\rceil$, whose range is $\frac1m [m]$.  For any $t\in [0,1]$ that is a multiple of $\frac1n$ we define the time-ordered products
\begin{equation}
W_{[1,t]}:=\prod_{i=1}^{n(1-t)} MT_{\big[t+\frac{i}{n},t+\frac{i-1}{n}\big]},\qquad W'_{[t,0]}:=\prod_{i=1}^{nt} M e^{\frac1n \cL_{\theta(i)}},\label{eq:Ws}
\end{equation}
with $W_{[1,1]}:=\id=:W'_{[0,0]}$. Using that these are quantum operations (and thus have norm bounded by one), we can exploit a telescopic sum together with Lem.\ref{lem4} to obtain
\begin{eqnarray}
\norm{W_{[1,0]}-W'_{[1,0]}} &=& \norm{ \sum_{i=0}^{n-1}W_{\big[1,\frac{i+1}{n}\big]}M\left(T_{\big[\frac{i+1}{n},\frac{i}{n}\big]}-e^{\frac1n \cL_{\theta(i+1)}}\right) W'_{\big[\frac{i}{n},0\big]}}\nonumber\\
&\leq &n \max_i \left\{\norm{T_{\big[\frac{i+1}{n},\frac{i}{n}\big]}-e^{\frac1n \cL_{\theta(i+1)}}}\right\}\nonumber\\
&\leq &L \max_i\left\{\left|\frac{i}{n}-\theta(i+1)\right|+\frac{1}{2n}\right\}\leq \frac{3L}{m},\label{eq:3Lm}
\end{eqnarray}
where the last step used $m\leq n$ and $|\theta(i+1)-i/n|\leq2/m$.

To emphasize the dependence on $n,m$, let us write $W'_{[1,0]}=:W'(n,m)$. Applying Thm.\ref{thm:1} we know that
$$ W'(m):=\lim_{n\rightarrow\infty} W'(n,m)=\prod_{j=1}^m e^{\frac1m P\cL_{j/m}P}P.$$
Moreover, $W':=\lim_{m\rightarrow\infty} W'(m)$ exists and corresponds to the solution of the master equation in Eq.(\ref{eq:MEtilde}).\footnote{In fact, this limit is the way in which the propagators are constructed in the first place (cf.~\cite{Tanabe.1997}, Sec.7.3).} This can be seen by essentially repeating the argument in Eq.(\ref{eq:Ws}) and Eq.(\ref{eq:3Lm}) with the corresponding generators and $M$ replaced by $\id$. Hence, by using $T_n=W_{[1,0]}$ we finally arrive at the desired result
\begin{eqnarray}
\limsup_{n\rightarrow\infty}\norm{T_n-W'} &\leq &\limsup_{m\rightarrow\infty}\limsup_{n\rightarrow\infty}\big(\norm{T_n-W'(n,m)}\nonumber\\ && \ +\norm{W'(n,m)-W'(m)}+\norm{W'(m)-W'}\big)=0\nonumber.
\end{eqnarray}
\end{proof}

\section{Discussion}\label{sec:discussion}
Finally, we want to discuss possible generalizations of the obtained results. 

The first and rather immediate observation is that the structure that is stated in the theorems is barely used in the proofs. Our interest was the realm of quantum theory, and therefore we decided to phrase the results in that language. However, it suffices mathematically to have an arbitrary Banach space (in place of $\cT(\cH)$), a contraction $M$ that satisfies the spectral condition in Eq.(\ref{eq:gapassumption}), and bounded generators $\cL_t$. The latter even need not generate a contraction semigroup as long as only finite times are considered.  In particular, `complete positivity' is not used in the proofs and for instance replacing it by `positivity' would be just as fine.

A second effortless generalization concerns the assumption that the time-intervals after which $M$ is applied are all equal. If we want to replace equally spaced intervals $[i/n,(i-1)/n]$ by varying intervals $[\nu(i/n),\nu((i-1)/n)]$ for some Lipschitz-function $\nu$, then this is equivalent to using equally spaced intervals with generators $\cL_{\nu(t)}$ in place of $\cL_t$. 

In fact, the two aforementioned points (general Banach spaces and non-uniform time-intervals) build the framework that is chosen in \cite{Zimboras.2018}.

Another rather straight forward generalization can be obtained by relaxing the Lipschitz assumption for $t\mapsto\cL_t$ to H\"older continuity for any H\"older exponent $\alpha\in (0,1]$. 

Finally, a possibly more desirable line of generalizations is to allow for unbounded generators. This direction is more challenging and beyond the scope of this paper. The reasoning in Sec.\ref{sec:t_dep} seems to be extendable to families of unbounded generators $\{\cL_t\}$ under suitable assumptions on the domain of the generators and the stability of the semigroups they generate (cf.~Chap.7 in \cite{Tanabe.1997}). The uniform operator topology then has to be left for the strong operator topology. However, identifying the natural conditions  under which an analogue of the ingredient from Sec.\ref{sec:t_indep} still holds true has to be left for future work. For special cases, existing results that point in this direction  in the time-independent scenario are summarized in \cite{Schmidt.2003}.

Finally, one may allow varying $M$ as it is initiated in \cite{Burgarth.2018b} or consider finite $n$, but these stories have to be told elsewhere.\vspace*{5pt}

\emph{Acknowledgments:} MMW acknowledges funding by the Deutsche Forschungsgemeinschaft (DFG, German Research Foundation) under Germany's Excellence Strategy – EXC-2111 – 390814868. Furthermore, we thank Simon Becker, Nilanjana Datta, and Robert Salzmann for pointing out the missing assumption in Eq.(\ref{eq:gapassumption}) (cf.~\cite{Becker.2020}).

\bibliographystyle{ieeetr}
\bibliography{Zeno}{}

\begin{thebibliography}{10}

\bibitem{Misra.1977}
B.~Misra and E.~C.~G. Sudarshan, ``{The Zeno’s paradox in quantum theory},''
  {\em Journal of Mathematical Physics}, vol.~18, no.~4, pp.~756--763, 1977.

\bibitem{Itano.1990}
W.~M. Itano, D.~J. Heinzen, J.~J. Bollinger, and D.~J. Wineland, ``{Quantum
  Zeno effect},'' {\em Phys. Rev. A}, vol.~41, pp.~2295--2300, 1990.

\bibitem{Fischer.2001}
M.~C. Fischer, B.~Guti\'errez-Medina, and M.~G. Raizen, ``{Observation of the
  Quantum Zeno and Anti-Zeno Effects in an Unstable System},'' {\em Phys. Rev.
  Lett.}, vol.~87, p.~040402, 2001.

\bibitem{Kato.1978}
T.~Kato and K.~Masuda, ``Trotter's product formula for nonlinear semigroups
  generated by the subdifferentials of convex functionals,'' {\em J. Math. Soc.
  Japan}, vol.~30, no.~1, pp.~169--178, 1978.

\bibitem{Chernoff.1968b}
P.~R. Chernoff, ``Note on product formulas for operator semigroups,'' {\em
  Journal of Functional Analysis}, vol.~2, no.~2, pp.~238--242, 1968.

\bibitem{Schmidt.2003}
A.~U. Schmidt, {\em Mathematical Physics Research at the Leading Edge},
  ch.~{Mathematics of the Quantum Zeno Effect}, pp.~111--141.
\newblock Nova Science Publishers, 2003.

\bibitem{Facchi.2008}
P.~Facchi and S.~Pascazio, ``{Quantum Zeno dynamics: mathematical and physical
  aspects},'' {\em Journal of Physics A: Mathematical and Theoretical},
  vol.~41, no.~49, p.~493001, 2008.

\bibitem{Li.2013}
Y.~Li, D.~A. Herrera-Mart\'{\i}, and L.~C. Kwek, ``{Quantum Zeno effect of
  general quantum operations},'' {\em Phys. Rev. A}, vol.~88, p.~042321, 2013.

\bibitem{Lidar.2013}
J.~M. Dominy, G.~A. Paz-Silva, A.~T. Rezakhani, and D.~A. Lidar, ``{Analysis of
  the quantum Zeno effect for quantum control and computation},'' {\em Journal
  of Physics A: Mathematical and Theoretical}, vol.~46, no.~7, p.~075306, 2013.

\bibitem{Filippov.2017}
I.~A. Luchnikov and S.~N. Filippov, ``Quantum evolution in the stroboscopic
  limit of repeated measurements,'' {\em Phys. Rev. A}, vol.~95, p.~022113,
  2017.

\bibitem{Burgarth.2018a}
D.~Burgarth, P.~Facchi, H.~Nakazato, S.~Pascazio, and K.~Yuasa, ``{Generalized
  Adiabatic Theorem and Strong Coupling Limits},'' {\em arXiv:1807.02036},
  2018.

\bibitem{Burgarth.2018b}
D.~Burgarth, P.~Facchi, H.~Nakazato, S.~Pascazio, and K.~Yuasa, ``{Quantum Zeno
  Dynamics from General Quantum Operations},'' {\em arXiv:1809.09570}, 2018.

\bibitem{Moebus.2018}
T.~M\"obus, ``{Generalized Quantum Zeno effects},'' {\em Bachelor's thesis,
  Department of Mathematics, Technical University of Munich}, July 19 2018.

\bibitem{Zimboras.2018}
N.~Barankai and Z.~Zimbor\'{a}s, ``Generalized quantum {Zeno} dynamics and
  ergodic means,'' {\em arXiv:1811.02509}, 2018.

\bibitem{Russo.1966}
B.~Russo and H.~A. Dye, ``{A note on unitary operators in $C\sp{\ast}$
  -algebras},'' {\em Duke Math. J.}, vol.~33, no.~2, pp.~413--416, 1966.

\bibitem{Evans.1978}
D.~E. Evans and R.~Hoegh-Krohn, ``{Spectral Properties of Positive Maps on
  C*-Algebras},'' {\em Journal of the London Mathematical Society}, vol.~s2-17,
  no.~2, pp.~345--355, 1978.

\bibitem{Engel.2000}
K.-J. Engel and R.~Nagel, {\em One-Parameter Semigroups for Linear Evolution
  Equations}, vol.~194 of {\em Graduate Texts in Mathematics}.
\newblock New York, NY: {Springer-Verlag New York Inc}, 2000.

\bibitem{Kato.1995}
T.~Kato, {\em Perturbation Theory for Linear Operators}, vol.~132.
\newblock Berlin and Heidelberg: Springer, 2~ed., 1995.

\bibitem{Simon.2015}
B.~Simon, {\em Operator theory: A comprehensive course in analysis, part 4}.
\newblock Providence, Rhode Island: {American Mathematical Society}, 2015.

\bibitem{hasenoehrl.2020}
M.~Haseno\"hrl and M.~M. Wolf, ``'interaction-free' channel discrimination,''
  {\em arXiv:2010.00623}, 2020.

\bibitem{Cauchy.1984}
H.~Fattorini and A.~Kerber, {\em The Cauchy Problem}.
\newblock Cambridge University Press, 1984.

\bibitem{Tanabe.1997}
H.~Tanabe, {\em Functional analytic methods for partial differential
  equations}, vol.~204 of {\em Monographs and textbooks in pure and applied
  mathematics}.
\newblock New York, NY: Dekker, 1997.

\bibitem{Becker.2020}
S.~Becker, N.~Datta, and R.~Salzmann, ``Quantum zeno effect for open quantum
  systems,'' {\em arXiv:2010.04121}, 2020.

\end{thebibliography}

\end{document}